\newtheorem{definition}{Definition}
\newtheorem{theorem}{Theorem}
\newtheorem{proposition}[theorem]{Proposition}
\newenvironment{proof}[1][Proof]{\paragraph{{#1}}}%
                {{\hfill\(\Box\)\\}}
\newcommand{\cand}{\text{ and }}
\newcommand{\bra}{\langle}
\newcommand{\ket}{\rangle}
\newcommand{\coll}[1]{\ensuremath{\left\{ {#1}\right\} }}
\newcommand{\fall}[1]{{\forall\,{#1},\ }}
\newcommand{\mc}[1]{{\mathcal{#1}}}
\newcommand{\mb}[1]{{\bf #1}}
\newcommand{\Mes}{\mathop{\mathrm{Mes}}}
\newcommand{\Uni}{\mathop{\mathrm{Uni}}}
\newcommand{\appl}{\mathop{\mathrm{appl}}\nolimits}
\newcommand{\blank}{\,\rule{5pt}{.5pt}\,}
\begin{document}

\title{Weakening the Born Rule -- \\ Towards a Stateless Formulation of Quantum Mechanics}
\author{Olivier Brunet \\
\texttt{ol{}ivier.{}brunet\ \hbox{at}\ normalesup.org}}

\maketitle

\begin{abstract}
The notion of state vector is, in quantum mechanics, as central as it is problematic, as illustrates the wealth of publications about the subjects, including in particular the many attempts to obtain an acceptable interpretation of quantum mechanics.

In this article, we propose a different approach, and initiate the study of a formulation of quantum mechanics, where the notion of state is entirely replaced by assertions about measurement outcomes. We define a notion of ``verification'' which represents the knowledge that one may have about the possible outcomes of the measurements performed on a quantum system, and express a set of logical rules which allow to reason about quantum systems using verification assertions only, and thus making no reference to the problematic notion of state. 
\end{abstract}

\section{Introduction}

Central to the mathematical formulation of quantum mechanics is the notion of \emph{state vector}. As one can read in any textbook, every physical system $S$ can be associated to a Hilbert space $\mc H$ such that states of $S$ are represented by rays (one-dimensional subspaces) of $\mc H$. Using state vectors, one can easily describe the evolution of a quantum system:
\begin{itemize}
  \item in a reversible way, which can be decribed by a unitary operator $U$ acting on $\mc H$. In that case, a system in state $|\varphi\ket$ is transformed into a system in state $U |\varphi\ket$;
  \item in a non-reversible way, by performing a \emph{measurement}: if a system in state $|\varphi\ket$ is measured with observable $\mc O$ (which is represented by an hermitian operator $O$), then the outcome is an eigenvalue $a$ of $O$, and the system's state becomes a vector belonging to the eigenspace $E_a$ associated to $a$, namely the orthogonal projection of $|\varphi\ket$ on $E_a$. It is worth noting that this implies that $E_a$ cannot be orthogonal to $|\varphi\ket$, which means that when measuring $S$, some outcomes are impossible to obtain.
\end{itemize}

However, almost a century after the elaboration of this formalism, the very nature of state vectors remains extremely problematic. Among important questions is whether the state vector does represent reality or simply our knowledge of it \cite{Spekkens2010Epistemic}. If the state vector does indeed directly represent reality, is it a complete representation or can it be complemented by some supplemental \emph{hidden variables}? These interrogations can be traced back to the very beginning of quantum mechanics with Einstein's criticisms (in the famous EPR article \cite{Einstein35EPR}, and even before at the 1927 Solvay conference \cite{QuantumCrossroads}). Other important questions are, among others, whether quantum mechanics is intrinsically probabilistic and whether it can comply with locality. However many important results, such as for instance no-go theorems (including Bell's theorem \cite{Bell64,Mermin93}, Kochen and Specker's theorem \cite{KochenSpecker67,Mermin93,Brunet07PLA} and more recent results \cite{PRB2012,Brunet13Dynamics}) have not completely succeeded to lift the `unease' regarding quantum mechanics, as Rovelli puts it in \cite{Rovelli96RQM}.

\ 

Yet, this notion of state vector is a purely abstract construction and cannot be ``accessed'' directly. Indeed, the only way to obtain actual data about a quantum system is by \emph{measuring} it. To quote Rovelli \cite{Rovelli96RQM} again,

\begin{quotation}
\emph{``the physical content of the theory is given by the outcomes of the measurements.''}
\end{quotation}

and, later

\begin{quotation}
\emph{``anything in between two measurement outcomes is like the “non-existing” trajectory of the electron, to use Heisenberg’s vivid expression, of which there is nothing to say.''}
\end{quotation}

As a consequence, rather than defining quantum mechanics using state vectors as the basic notion, it seems that it would be beneficial to attempt to reverse the perspective and to define a measurement-based formulation of quantum mechanics (or, at least, a large fragment of it), with absolutely no reference whatsoever to state vectors.

The expected benefits of such an approach are the obtention of a purely epistemic formulation of quantum mechanics which would rest on experimental data only and avoid the use (at least, at first) of abstract and interpretational elements and thus would not refer in any way to some problematic ``real state of affairs''. 

\ 

The present article describes the first elements of such an attempt, in which we will restrict ourselves to nondegenerate observables.

\section{Quantum Measurement Logic}

\subsection{Verification Judgements}

Quantum mechanics teaches us that to every closed quantum system $S$, one can associate a Hilbert space $\mc H$ such that every observables for $S$ can be represented by a unitary operator acting on $\mc H$.

Stated this way, the use of Hilbert spaces is not motivated by the existence of a state vector which would supposedly belong to it. Instead, it follows from the fact that all the possible outcomes of observables which can be applied to a given quantum system can conveniently be represented as a Hilbert space or, more precisely, as the closed subspaces of a Hilbert space (we identify here an outcome of $\mc O$ with an eigenspace of the corresponding hermitian operator $O$ rather than with the associated eigenvalue). Thus, the use of Hilbert spaces is entirely justified by the consideration of actual experimental data regarding $S$ rather than from abstract assumptions.

Following this idea, throughout this article, we will identify every observable $\mc O$ with the set of its outcomes, which we view as the eigenspaces of the corresponding hermitian operator $O$. Moreover, as stated earlier, we will only consider non-degenerate observables, so that each eigenspace is one-dimensional. As a consequence, we can represent the possible outcomes of a non-degenerate observable as an orthonormal basis of $\mc H$ and we will hereafter represent observables of a quantum system $S$ modelled by Hilbert space $\mc H$ as orthonormal basis of $\mc H$.

It can be remarked that such a representation is unique, up to phase factors.

\ 

Since we chose to base our approach on measurement outcomes, let us first focus on the \emph{projection postulate} which we quote here from \cite{Rae2002QM}:
\begin{quotation}
\noindent \textbf{Postulate 4.2} {[}...{]} 
Immediately after such a measurement, the wavefunction of the system will be identical to the eigenfunction corresponding to the eigenvalue obtained as a result of the measurement.
\end{quotation}
%
%
Before studying the way this postulate can be expressed without making any reference to state vectors, let us first remark that one has to be able to differentiate a quantum system before and after it has been measured.
In order to do this simply, we suggest the following notation:

\begin{definition}
If the measurement of a quantum system $S$ with observable $O$ yields outcome $|\varphi\ket$, we denote this by writing
$$ (S', |\varphi\ket) = \Mes(S, \mc O). $$
Moreover, with this notation, $S'$ represents the system right after the measurement occured. 
\end{definition}

With this notation, let us first express the fact that we consider that every outcome of observable $\mc O$ is an element of $\mc O$ (which, we recall, is seen as an orthonormal basis of $\mc H$) by the following implication:
$$ (S', |\varphi\ket) = \Mes(S, \mc O) \implies |\varphi\ket \in \mc O. $$
In that expression, the specifications of the system before (that is, $S$) and after (that is $S'$) the measurement play no role. In order to have a lighter expression, we write this~as
$$ (\blank, |\varphi\ket) = \Mes(\blank, \mc O) \implies |\varphi\ket \in \mc O, $$
where, by writing ``$\blank$'', we mean that the corresponding element is not relevant in the considered assertion (even though it has some physical reality).

\ 

Returning to the \emph{projection postulate}, this postulate, used together with the Born Rule, tells that if a quantum system $S$ has been measured with outcome $|\varphi\ket$ and if it measured again immediately after the first measurement, then any outcome orthogonal to $|\varphi\ket$ is impossible. This can be written, using appropriate ``$\blank$'', as follows:
$$ (S, |\varphi\ket) = \Mes (\blank, \blank) \implies \fall {(\blank, |\psi\ket) = \Mes(S, \blank)} \bra \varphi | \psi \ket \neq 0. $$

Let us now consider the unitary evolution of a closed system. If the application to system $S$ of such a transformation represented by unitary operator $U$ yields $S'$, we write
$$ S' = \Uni(S,U). $$
Regarding measurement outcomes, if $S$ is known to be such that any measurement outcome orthogonal to a vector $|\varphi\ket$ is impossible (such a statement can follow from the application of the projection postulate), then it is clear that for $S' = \Uni(S,U)$, any measurement outcome orthogonal to $U|\varphi\ket$ is impossible for $S'$, and reciprocally. This can be expressed, formally,~as
\begin{multline}
S' = \Uni(S, U) \implies \\
\Bigl[ \fall{(\blank, |\psi\ket) = \Mes(S, \blank)} \bra \varphi | \psi \ket \neq 0 \iff \\ \fall{(\blank, |\psi\ket) = \Mes(S', \blank)} \bra \varphi | U | \psi \ket \neq 0 \Bigr].
\end{multline}

\ 

In the formal translations of the projection postulate and the unitary evolution of a system, one can notice that a particular pattern appears on several occurences, of the~form
$$\hbox{“}~\fall{(\blank, |\psi\ket) = \Mes(S, \blank)} \bra \varphi | \psi \ket \neq 0.~\hbox{”}$$
This suggests the definition of the following notation:
\begin{definition}
Given a system $S$ represented by a Hilbert space $\mc H$ and a vector $|\varphi\ket \in \mc H$, let the logical judgement $S \models |\varphi\ket$ denote the statement
$$ \fall{(\blank, |\psi\ket) = \Mes(S, \blank)} \bra \varphi | \psi \ket \neq 0.$$
We then say that system $S$ \emph{verifies} $|\varphi\ket$.
\end{definition}

Using this notation, the previous rules can be reexpressed~as:
\begin{description}
  \item[Projection Postulate]
  $$ (S, |\varphi\ket) = \Mes(\blank, \blank) \implies S \models |\varphi\ket $$
  \item[Unitary Evolution] 
  $$ S' = \Uni(S, U) \implies \bigl(S \models |\varphi\ket \iff S' \models U |\varphi\ket\bigr) $$
\end{description}

Before continuing with the definition of our formalism for taking composite systems into account, let us first make a few comments about this notion of verification. Despite an apparent similarity with state vectors, (since in each case, we relate a quantum system and a vector belonging to the corresponding Hilbert space), both notions are extremely different.

The first obvious difference is that state vectors are abstract elements, which cannot be accessed directly. Meanwhile, a verification judgement is a statement which relates exclusively to actually obtainable experimental data. As a consequence, the statement $S \models |\varphi\ket$ can be directly refuted.

Moreover, while it is postulated that a closed quantum system is always in some definite state vector
, it makes no sense to state in general that, given a quantum system $S$, there exists a vector $|\varphi\ket$ such that $S \models |\varphi\ket$. Indeed, a verification judgement can only be obtained using the projection postulate rule, or at least deduced using the previous rules (together with a few more rules which will be presented next).

Another major difference is that, as we will see, it is possible to express several distinct yet valid verification judgements about a single quantum system, which contrasts deeply with the supposed (and problematic) uniqueness of a state vector.

At last important point that would be made is that the validity of a verification judgement $S \models |\varphi\ket$ is independent of whether $S$ has actually been measured. Indeed, if $S$ has not been measured yet, such a judgement can be seen as predictive. If $S$ has been measured, such a judgement can be seen as an assessement of the validity of quantum mechanics. Even more importantly, this independence regarding whether the measurement has actually happened is particularly welcome when the considered measurement is supposed to take place at a space-like separated location. 



\ 



From the definition of a verification judgement, it is clear that the following rule is valid:
\begin{equation}
\bigl[S \models |\varphi\ket \cand (\blank, |\psi\ket) = \Mes(S, \blank)\bigr] \implies \bra \varphi | \psi \ket \neq 0.
\end{equation}
We call it the \emph{Weak Born Rule}, since it can be seen as a weakening of the usual Born Rule. Indeed, the usual Born Rule states that if a system $S$ is in a normalized state $|\varphi\ket$, the probability of obtaining an outcome corresponding to normalized eigenvector $|\psi\ket$ equals $\bigl|\bra \varphi \vert \psi \ket \bigr|^2$ when measuring $S$ with a nondegenerate observable. This result does in particular imply that if $|\psi\ket$ is orthogonal to $|\varphi\ket$, it cannot be obtained as an outcome. Equivalently, for an eigenvector $|\psi\ket$ to be possible as an outcome, it must not be orthogonal to $|\varphi\ket$. The Weak Born Rule is precisely expressing the equivalent result, where the statement ``$S$ is in state $|\varphi\ket$'' is replaced by ``$S \models |\varphi\ket$''.

Actually, this is just an other way to express the definition of verification judgements which are the main component of our approach, and this remark motivates the title of this article.

One should be aware that all this rule indicates is whether a given outcome is possible or not. However, stating that an outcome has to be considered as possible (that is, has not been ruled out by the Weak Born Rule) does not imply anything regarding whether it \emph{should\/} be obtainable somehow.

A notable exception occurs when $S$ verifies $|\varphi\ket$ and is measured with an (non degenerate) observable $\mc O$ compatible with $|\varphi\ket$ (which means that $|\varphi\ket$ is an eigenvector of $\mc O$ or, in our approach, that $|\varphi\ket$ belongs to $\mc O$). If the measurement of $S$ with observable $\mc O$ yields outcome $|\psi\ket$, then $|\psi\ket$ belongs to $\mc O$, and because of the \emph{Weak Born Rule}, it cannot be orthogonal to $|\varphi\ket$. The only suitable element of $\mc O$ is $|\varphi\ket$, so that it is the only possible outcome:
$$ \bigl(S \models |\varphi\ket \cand |\varphi\ket \in \mc O \cand (\blank, |\psi\ket) = \Mes(S, \mc O)\bigr) \implies |\varphi\ket = |\psi\ket. $$




\subsection{Composite systems}

\paragraph{Tensor Products} The quantum systems considered so far were all supposed to verify an important implicit property: if a system $S$ is modelled by Hilbert space $\mc H$, then \emph{any} orthonormal basis of $\mc H$ can give rise an observable. A quantum system which verifies this property will be called ``simple''. 


Now, given $n$ quantum systems $S_1, \ldots, S_n$, respectively modelled by Hilbert spaces $\mc H_1, \ldots, \mc H_n$, one can form a composite system $S = (S_1, \ldots, S_n)$, modelled by the tensor product $\mc H = \mc H_1 \otimes \cdots \otimes \mc H_n$. This composite system, contrary to simple systems, does not verify the fact that any orthonormal basis of $\mc H$ corresponds to an observable of $S$.

Indeed, the measurement of $S$ does in fact reduce to the measurement of each of subsystem $S_1, \ldots, S_n$ and the outcome $|\varphi_S\ket$ is then the tensor product of outcomes $|\varphi_1\ket, \ldots, |\varphi_n\ket$:
$$ |\varphi_S\ket = |\varphi_1\ket \otimes \cdots \otimes |\varphi_n\ket.$$
The latter is the general form for outcomes of $S$. 
This situation can be illustrated by the fact that in order to measure a system made of two entangled particles, one has to measure each particle separatly and then gather both outcomes. 

\ 

This remark can be used to prove a first simple result on verification judgements regarding composite systems:

\begin{proposition}
$$ (S_1, S_2) \models |\varphi_1 \otimes \varphi_2\ket \iff \bigl( S_1 \models |\varphi_1\ket \cand S_2 \models |\varphi_2\ket\bigr) $$
\end{proposition}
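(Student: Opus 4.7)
The plan is to unfold both sides of the equivalence using the definition of $\models$, and then rely on the structural fact recalled just before the proposition: measuring the composite system $S = (S_1, S_2)$ reduces to measuring $S_1$ and $S_2$ separately, with the resulting outcome being precisely the tensor product $|\psi_1\ket \otimes |\psi_2\ket$ of the respective outcomes. The algebraic backbone is the standard identity
$$ \bra \varphi_1 \otimes \varphi_2 \mid \psi_1 \otimes \psi_2 \ket = \bra \varphi_1 \mid \psi_1 \ket \cdot \bra \varphi_2 \mid \psi_2 \ket, $$
combined with the elementary observation that a product of complex numbers is nonzero iff each factor is nonzero.

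For the reverse direction ($\Leftarrow$), I would assume $S_1 \models |\varphi_1\ket$ and $S_2 \models |\varphi_2\ket$, take an arbitrary measurement outcome of $(S_1,S_2)$, which by the decomposition of composite measurements has the form $|\psi_1\ket \otimes |\psi_2\ket$ with $(\blank, |\psi_i\ket) = \Mes(S_i, \blank)$ for $i = 1, 2$, and then invoke the two hypotheses to obtain $\bra \varphi_i \mid \psi_i \ket \neq 0$; multiplying the two scalars gives $\bra \varphi_1 \otimes \varphi_2 \mid \psi_1 \otimes \psi_2 \ket \neq 0$, which is the required verification judgement.

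For the forward direction ($\Rightarrow$), I would fix an arbitrary outcome $|\psi_1\ket$ of a measurement of $S_1$ and show $\bra \varphi_1 \mid \psi_1 \ket \neq 0$ (the argument for $S_2$ is symmetric). To use the composite-system hypothesis, I need to exhibit some outcome of $S_2$ to pair with $|\psi_1\ket$, so that $|\psi_1\ket \otimes |\psi_2\ket$ appears as an outcome of a measurement of $(S_1,S_2)$; the hypothesis then yields $\bra \varphi_1 \mid \psi_1 \ket \cdot \bra \varphi_2 \mid \psi_2 \ket \neq 0$, forcing $\bra \varphi_1 \mid \psi_1 \ket \neq 0$.

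The only delicate point, and the step I expect to be the main obstacle, is precisely this last existential move in the forward direction: it tacitly uses that for any measurement of $S_1$ there is some simultaneous measurement of $S_2$ available to complete it into a measurement of the composite system, and that every pair of subsystem outcomes arises in this way. This is exactly the content of the paragraph preceding the proposition ("the measurement of $S$ does in fact reduce to the measurement of each of subsystem $S_1, \ldots, S_n$", with the tensor product as the general form of outcomes), so I would cite it explicitly rather than re-prove it. Once this structural identification between outcomes of $(S_1,S_2)$ and pairs of outcomes of $S_1$ and $S_2$ is granted, the rest of the argument is just the bilinearity of the inner product and a one-line scalar calculation.
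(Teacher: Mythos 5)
Your proposal is correct and follows essentially the same route as the paper: unfold both verification judgements via the identity $\bra \varphi_1 \otimes \varphi_2 \vert \psi_1 \otimes \psi_2 \ket = \bra \varphi_1 \vert \psi_1 \ket \bra \varphi_2 \vert \psi_2 \ket$ and the fact that a product of scalars is nonzero iff each factor is. The paper merely compresses your two directions into a single chain of equivalences quantified over pairs $(|\psi_1\ket, |\psi_2\ket)$, leaving implicit the ``completion'' step you rightly single out as the delicate point.
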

\begin{proof}
For all $(|\psi_1\ket, |\psi_2\ket) \in \mc H_1 \times \mc H_2$, one~has
$$ \bra \varphi_1 \otimes \varphi_2 \vert \psi_1 \otimes \psi_2 \ket = \bra \varphi_1 \vert \psi_1 \ket \bra \varphi_2 \vert \psi_2 \ket.$$
As a consequence, 
\begin{align*}
(S_1, S_2) \models |\varphi_1 \otimes \varphi_2\ket & \iff \fall {(|\psi_1\ket, |\psi_2\ket) \in \mc H_1 \times \mc H_2} \bra \varphi_1 \otimes \varphi_2 \vert \psi_1 \otimes \psi_2 \ket \neq 0 \\
& \iff \fall {(|\psi_1\ket, |\psi_2\ket) \in \mc H_1 \times \mc H_2} \bra \varphi_1 \vert \psi_1 \ket \bra \varphi_2 \vert \psi_2 \ket \neq 0 \\
& \iff \fall {|\psi_1\ket \in \mc H_1} \bra \varphi_1 \vert \psi_1 \ket \neq 0 \cand \\ & \phantom{\iff} \qquad \qquad \fall {|\psi_2\ket \in \mc H_2} \bra \varphi_2 \vert \psi_2 \ket \neq 0 \\
& \iff S_1 \models |\varphi_1\ket \cand S_2 \models |\varphi_2\ket.
\end{align*}
\end{proof}
\paragraph{Partial Measurements}

Suppose that that two quantum system $S_1$ and $S_2$ (modelled respectively by $\mc H_1$ and $\mc H_2$) are known to verify $|\psi\ket \in \mc H_1 \otimes \mc H_2$:
\begin{equation*}
  (S_1, S_2) \models \mathopen|\psi\ket.
\end{equation*}
Suppose moreover that $S_1$ has been measured, yielding outcome $|\varphi_1\ket$:
$$ (\blank, |\varphi_1\ket) = \Mes(S_1, \blank). $$
An interesting question to consider is the determination of whether the knowledge of outcome $|\varphi_1\ket$ together with that of $(S_1, S_2) \models |\psi\ket$, does provide extra knowledge regarding $S_2$ (in the form, of course, of a verification judgement).

In order to answer this question positively, let $|\varphi_2\ket$ denote the potential outcome of a measurement performed on $S_2$. From the statement $(S_1, S_2) \models |\psi\ket$, one should~have
\begin{equation}
  \bra \psi \vert \varphi_1 \otimes \varphi_2 \ket \neq 0. \label{eq:appl1}
\end{equation}
Considering an orthogonal basis $\coll{|e_i\ket}$ (resp. $\coll{|f_j\ket}$) of $\mc H_1$ (resp. $\mc H_2$), let us~write
$$ |\varphi_1\ket = \sum_i \alpha_i |e_i\ket, \quad |\varphi_2\ket = \sum_j \beta_j |f_j\ket \quad \hbox{and} \quad |\psi\ket = \sum_{i,j} \gamma_{i,j} |e_i \otimes f_j\ket.$$
Equation \ref{eq:appl1} is then equivalent~to
$$ \sum_{i,j} \gamma_{i,j}^\star \alpha_i \beta_j \neq 0, $$
which can also be expressed~as
\begin{equation}
  \sum_j \Bigl(\sum_i \gamma_{i,j} \alpha_i^\star \Bigr)^{\!\star} \beta_j \neq 0.
\end{equation}
This equality can be seen as expressing the inner product between $|\varphi_2\ket$ and another vector of $\mc H_2$ which depends on $|\varphi_1\ket$ and $|\psi\ket$, leading to the following definition.
\begin{definition}
Using the previous notations, given two vectors $|\varphi_1\ket \in \mc H_1$ and $|\psi\ket \in \mc H_1 \otimes \mc H_2$, we define the vector $\mathopen| \appl_1(\varphi_1, \psi) \ket$ of $\mc H_2$~as
\begin{equation}
\mathopen| \appl_1(\varphi_1, \psi) \ket = \sum_j \Bigl(\sum_i \gamma_{i,j} \alpha_i^\star \Bigr) |f_j\ket. \label{eq:appl2}
\end{equation}
\end{definition}

We call the  anti-linear application $\mathopen|\varphi_1\ket \mapsto \mathopen| \appl_1(\varphi_1, \psi) \ket$ (or, written in a shorter form, as $\mathopen| \appl_1(\ \cdot\ , \psi) \ket$) the \emph{partial application} of $|\psi\ket$ at place $1$. Similarly, one can define, for $|\psi\ket \in \mc H_1 \otimes \cdots \otimes \mc H_n$, the partial application $\mathopen|\appl_k(\ \cdot\ , \psi)\ket$ for any $k$ ranging from $1$ to $n$.

The name ``partial application'' refers to the fact that, regarding the pos\-si\-bi\-lity of outcomes, a vector $|\psi\ket \in \mc H_1 \otimes \cdots \otimes \mc H_n$ is can be seen as an application
\begin{align*}
\mc H_1 \times \cdots \times \mc H_n & \rightarrow \mb C \\
(|\varphi_1\ket, \ldots, |\varphi_n\ket) & \mapsto \bra \psi \vert \varphi_1 \otimes \cdots \otimes \varphi_n \ket.
\end{align*}
To that respect, the specification of one of the $n$ arguments does indeed correspond to the partial application of the previous function.

\ 

We thus have $\bra \psi \,|\, \varphi_1 \otimes \varphi_2 \ket = \bra \appl_1(\varphi_1, \psi) \mathbin| \varphi_2\ket$ so that, returning to our initial question, if $(S_1, S_2) \models |\psi\ket$ and if $(\blank, |\varphi_1\ket) = \Mes(S_1, \blank)$, then any outcome $|\varphi_2\ket$ obtained by measuring $S_2$ should verify $\bra \psi \,|\, \varphi_1 \otimes \varphi_2\ket \neq 0$ or equivalently $\bra \appl_1(\varphi_1, \psi) | \varphi_2\ket \neq 0$, which can be state~as
\begin{multline}
\bigl[(S_1, S_2) \models |\psi\ket \hbox{\ and\ } (\blank, |\varphi_1\ket) = \Mes(S_1, \blank) \bigr] \implies \\ \fall {(\blank, |\varphi_2\ket) = \Mes(S_2, \blank)} \bra \appl_1(\varphi_1, \psi) | \varphi_2\ket \neq 0.
\end{multline}
But one can recognize on the right-hand side of this implication a verification judgement. Thus, we have proved:
\begin{proposition}
Given two quantum system $S_1$ and $S_2$, one~has
\begin{equation}
  \bigl[(S_1, S_2) \models |\psi\ket \hbox{\ and\ } (\blank, |\varphi_1\ket) = \Mes(S_1, \blank) \bigr] \implies S_2 \models \mathopen| \appl_1(\varphi_1, \psi) \ket.
\end{equation}
\end{proposition}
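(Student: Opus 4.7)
The plan is to unfold the definition of a verification judgement on the right-hand side and reduce everything to the inner product identity $\bra \psi \,|\, \varphi_1 \otimes \varphi_2 \ket = \bra \appl_1(\varphi_1, \psi) \,|\, \varphi_2 \ket$, which is essentially what motivated the definition of $\appl_1$. Concretely, I would spell out that proving $S_2 \models | \appl_1(\varphi_1, \psi)\ket$ amounts to showing that for any potential outcome $|\varphi_2\ket$ of a measurement performed on $S_2$, one has $\bra \appl_1(\varphi_1,\psi) \,|\, \varphi_2 \ket \neq 0$.

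First, I would verify the inner product identity by expanding both sides in the bases $\{|e_i\ket\}$ and $\{|f_j\ket\}$ used in the definition of $\appl_1$: writing $|\varphi_1\ket = \sum_i \alpha_i |e_i\ket$, $|\varphi_2\ket = \sum_j \beta_j |f_j\ket$ and $|\psi\ket = \sum_{i,j} \gamma_{i,j} |e_i \otimes f_j\ket$, both sides compute to $\sum_{i,j} \gamma_{i,j}^\star \alpha_i \beta_j$. Using equation \eqref{eq:appl2}, this step is just a rearrangement and requires no real work beyond what the preceding discussion has already made explicit.

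Next, I would invoke the key physical fact established earlier in the section: any outcome of measuring the composite system $(S_1, S_2)$ factors as a tensor product $|\varphi_1 \otimes \varphi_2\ket$, with $|\varphi_1\ket$ being an outcome of a measurement of $S_1$ and $|\varphi_2\ket$ an outcome of a measurement of $S_2$. Combined with the hypothesis $(S_1, S_2) \models |\psi\ket$ and the definition of verification, this forces $\bra \psi \,|\, \varphi_1 \otimes \varphi_2 \ket \neq 0$ for the given $|\varphi_1\ket$ and any potential $|\varphi_2\ket$. Substituting the inner product identity then yields $\bra \appl_1(\varphi_1, \psi) \,|\, \varphi_2\ket \neq 0$, which is exactly $S_2 \models |\appl_1(\varphi_1, \psi)\ket$.

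The main subtlety, and the only step that is more than a calculation, is the appeal to the factorisation of outcomes of composite systems: we are implicitly using that a measurement on $S_1$ alone can be viewed as one half of a joint measurement on $(S_1, S_2)$, so that for every outcome $|\varphi_2\ket$ that $S_2$ might yield, the pair $(|\varphi_1\ket, |\varphi_2\ket)$ really is a legitimate joint outcome against which $(S_1,S_2) \models |\psi\ket$ may be evaluated. Once this physical point is granted, the rest of the argument is a straightforward unfolding of definitions.
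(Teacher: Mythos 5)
Your proof is correct and follows essentially the same route as the paper, which derives the proposition by exactly this chain: expand $\bra \psi \,|\, \varphi_1 \otimes \varphi_2 \ket$ in product bases to identify it with $\bra \appl_1(\varphi_1, \psi) \,|\, \varphi_2 \ket$, then apply the definition of $(S_1, S_2) \models |\psi\ket$ to the factored joint outcome $|\varphi_1 \otimes \varphi_2\ket$. You also correctly isolate the one non-computational step — that a measurement of $S_1$ paired with any potential outcome of $S_2$ counts as a joint outcome of the composite system — which the paper likewise relies on (implicitly, via its ``one should have'' step).
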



We insist on the fact that such a deduction of $S_2 \models \mathopen| \appl_1(\varphi_1, \psi) \ket$ from both $(S_1, S_2) \models |\psi\ket$ and $(\blank, |\varphi_1\ket) = \Mes(S_1, \blank)$ consists only in providing some new 
knowledge about the system. It only relates to measurement outcomes and does not imply anything about something that would or should happen to the considered system.



\subsection{Summary}

All the rules previously obtained are summarized in figure \ref{fig:hilbert_postulates} and constitute what we call \emph{Quantum Measurement Logic}, or QML. It consists mainly in two parts:
\begin{enumerate}
  \item the description of an experimental setup, with two different types of elements: unitary operators and measurements;
  \item statements about possible measurement outcomes. As stated earlier, they are of the form ``if $S$ is measured, then the outcome cannot be orthogonal of such vector''. 
\end{enumerate}

This approach provides a general framework for reasoning about quantum system. However, it is extremely different from usual quantum logic approaches (\cite{Birkhoff36QuantumLogic,Svozil98Book,DallaChiara2001QuantumLogic}) where the logical inferences are intended to be made ``from within'' the Hilbert lattice. In our approach, the basic elements and the rules are all part of classical 
logic, and the Hilbert lattices are only present as the set to which measurements outcomes belong.

From the design of this logic, it is clear that it is correct with regards to quantum mechanices, so that any statement which can be proved with QML is a valid statement of quantum mechanics. 

\begin{figure}
\begin{framed}
\textbf{Outcome Definition}
\[
    (\blank, |\varphi\ket) = \Mes(\blank, \mc O) \implies |\varphi\ket \in \mc O
\]
\textbf{Weak Born Rule}
\[ \bigl[S \models |\varphi\ket \cand (\blank, |\psi\ket) = \Mes(S, \blank)\bigr] \implies  \bra \varphi | \psi \ket \neq 0 \]
\textbf{Projection Postulate}
\[ (S, |\varphi\ket) = \Mes(\blank, \blank) \implies S \models |\varphi\ket \]
\textbf{Unitary Evolution}
\[ S' = \Uni(S, U) \implies \bigl[ S \models |\varphi\ket \iff  S' \models U |\varphi\ket \bigr] \]
\textbf{Tensor product}
\[ (S_1, S_2) \models |\varphi_1 \otimes \varphi_2\ket \iff \bigl(S_1 \models |\varphi_1\ket \cand S_2 \models |\varphi_2\ket \bigr) \]
\textbf{Partial Measurement}
\[ \bigr[(S_1, S_2) \models |\psi\ket \cand (\blank, |\varphi_1\ket) = \Mes(S_1, \blank) \bigr] \implies S_2 \models \mathopen|\appl_1(\varphi_1, \psi)\ket \]
\caption{Rules of Quantum Measurement Logic}
\label{fig:hilbert_postulates}
\end{framed}
\end{figure}

\section{Illustration}

We now illustrate the use of QML by developping three examples. The first one, the teleportation circuit, shows that QML is expressive enough to capture some caracteristic elements of quantum mechanics, such as entanglement. The second example will focus on the measurement of an EPR pair. Finally, we will turn to Hardy's paradox, and show that a careful examination of this paradox leads to the conclusion that it cannot be expressed in QML and thus relies on supplemental and interpretational elements.


\subsection{Quantum Teleportation}

We consider the quantum teleportation scheme \cite{Bennett93Teleportation}. The first part of the corresponding setup can be represented in terms of a quantum circuit as in figure \ref{fig:quantum_teleportation}.

\begin{figure}
  \begin{framed}
\hbox{} \hfill \begin{tikzpicture}

\tikzstyle{operator} = [draw,fill=white,minimum size=1.5em] 
\tikzstyle{phase} = [fill,shape=circle,minimum size=5pt,inner sep=0pt]
\tikzstyle{point} = [fill,shape=circle,minimum size=3pt,inner sep=0pt]
\tikzstyle{surround} = [fill=blue!10,thick,draw=black,rounded corners=2mm]

\matrix [column sep = {8mm,between origins}, row sep = {8mm,between origins}] {
  \node (aa) [label = left:$\alpha |0\ket + \beta |1\ket$] {} ; &
  \node[point] (a1) [label = above:$C_1$] {} ; &
  \node[phase] (ab) {} ; &
  \node[point] (a2) [label = above:$C_2$] {} ; &
  \node[operator] (ac) {H} ; &
  \node[point] (a3) [label = above:$C_3$] {} ; &
  \node (ad) {} ; \\
  \node (ba) {} ; &
  \node[point] (b1) [label=above:$A_1$] {} ; &
  \node[inner sep = 0pt, outer sep = 0pt] (bb) {$\bigoplus$} ; &
  \node[point] (b2) [label=above:$A_2$] {} ; & & &
  \node (bd) {} ; \\
  \node (ca) {} ; &
  \node[point] (c1) [label=above:$B_1$] {} ; & & & & & 
  \node (cd) {} ; \\
} ;

\draw [-] (aa) -- (a1) -- (ab) -- (a2) -- (ac) -- (a3) -- (ad) ;
\draw [-] (ba) -- (b1) -- (bb) -- (bd) ;
\draw [-] (ca) -- (c1) -- (cd) ;
\draw [-] (bb) -- (ab) ;

\draw[decorate,decoration={brace}, thick] (ca) -- (ba)
  node[midway, left] (bracket) {$\bigl|\Phi^+\bigr\ket\ $} ;

\end{tikzpicture} \hfill \hbox{}
\caption{The Quantum Teleportation Circuit}
\label{fig:quantum_teleportation}
\end{framed}
\end{figure}
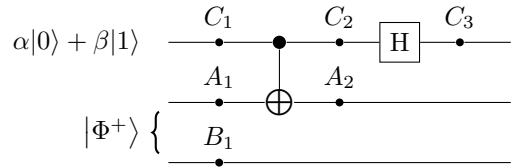

Two particles $A$ and $B$ are prepared in a maximally entangled state~$|\Phi^+\ket$:
$$ (A_1, B_1) \models |\Phi^+\ket \propto |00\ket + |11\ket, $$
and are shared between Alice and Bob. Alice takes particle $A$ and Bob particle~$B$.

Suppose now that Alice has a particle $C$ she wishes to ``teleport'' to Bob, and suppose moreover that
$$ C_1 \models \alpha |0\ket + \beta |1\ket.$$
We thus have $ (A_1, B_1) \models |00\ket + |11\ket$ and $ C_1 \models \alpha |0\ket + \beta |1\ket$ or, equivalently
\begin{multline*}
(A_1, B_1, C_1) \models \bigl(|00\ket + |11\ket \bigr) \otimes \bigl(\alpha |0\ket + \beta |1\ket \bigr) \\ \propto \alpha |000\ket + \beta |001\ket + \alpha |110\ket + \beta |111\ket
\end{multline*}
Alice can proceed as follow:
\begin{enumerate}
  \item she applies a controlled-not operator $\oplus$ to particles $A$ and $C$:
  $$ (A_2, C_2) = \Uni\bigl((A_1, C_1), \oplus\bigr), \quad \hbox{so~that} $$
\begin{multline*}
(A_2, B_1, C_2) \models \alpha |000\ket + \beta |101\ket + \alpha |110\ket + \beta |011\ket \\
\propto \alpha \bigl(|00\ket + |11\ket \bigr) \otimes |0\ket + \beta \bigl(|10\ket + |01\ket \bigr) \otimes |1\ket;
\end{multline*}
\item she then applies a Hadamard gate to particle $C$ (i.e.~$C_3 = \Uni(C_2, H)$), leading~to
$$ (A_2, B_1, C_3) \models |\Psi\ket, \quad \hbox{where} $$
\begin{align*}
|\Psi\ket & \propto \alpha \bigl(|00\ket + |11\ket \bigr) \otimes \bigl(|0\ket + |1\ket \bigr) + \beta \bigl(|10\ket + |01\ket \bigr) \otimes \bigl(|0\ket - |1\ket\bigr) \\
& = \alpha \bigl(|000\ket + |001\ket + |110\ket + |111\ket \bigr) + \beta \bigl(|100\ket - |101\ket + |010\ket - |011\ket \bigr);
\end{align*}
\end{enumerate}
\begin{enumerate} \setcounter{enumi}{2}
\item she measures both her particles with observable $\coll{|0\ket, |1\ket}$. 
\end{enumerate}
Suppose, for instance, that $(\blank, |1\ket) = \Mes(C_3, \blank)$ and $(\blank, |0\ket) = \Mes(A_2, \blank)$. We can deduce that
$$ B_1 \models \appl_{A,C} \bigl( \mathopen\vert01\ket, |\Psi\ket \bigr) \propto \alpha |0\ket - \beta |1\ket .$$

Then, applying unitary operator $\sigma_z$ to $B$ turns it into the ``teleported'' version of particle $C$. Formally, for $B_2 = \Uni(B_1, \sigma_z)$, one~has
$$ B_2 \models \alpha |0\ket + \beta |1\ket, $$ 
so that $B_2$ is now verifies the judgement originally verified by $C_1$.

\ 

This calculation can be generalized to the situation where particle $C$ was initially entangled with another system $D$, with the composite system such~that
$$ (C_1, D) \models \alpha |0\ket \otimes |\varphi_0\ket + \beta |1\ket \otimes |\varphi_1\ket.$$
In that case, one would have obtained
$$ (B_2, D) \models \alpha |0\ket \otimes |\varphi_0\ket + \beta |1\ket \otimes |\varphi_1\ket.$$

In this approach, the term ``teleportation'' might appear as a bit excessive, since all that is expressed here is that if a particle $C$ is \emph{known} at the beginning to verify some $|\varphi\ket$, then after the ``teleportation'' has occured, particule $B$ is \emph{known} to verify the same $|\varphi\ket$. We do indeed insist on the fact that the relation between verification statements is an one-way implication and not an equivalence:
$$ C_1 \models |\varphi\ket \implies B_2 \models |\varphi\ket $$
or, more generally
$$ C_1, D \models |\psi\ket \implies B_2, D \models |\psi\ket.$$ 
Thus, we can only assert that any verification judgement that could be made concerning $C_1$ can now be made concerning $B_2$. Operationally however, regarding measurement outcomes, this means that $B_2$ behaves exactly the same way as $C_1$ would have before the teleportation.

\subsection{Measuring an EPR pair}

Let us now focus on two spin-$\frac 1 2$ particles $A$ and $B$ prepared in maximally entangled Bell's state $|\Psi^-\ket$, that~is
  \[
    (A, B) \models |\Psi^-\ket \propto |0, 1\ket - |1, 0\ket.
  \]
If we define
\[
   | u(\theta, \varphi) \ket = \cos \theta |0\ket + \sin \theta e^{i\varphi} |1\ket \quad \hbox{and} \quad | v(\theta, \varphi) \ket = - \sin \theta |0\ket + \cos \theta e^{i\varphi} |1\ket,
 \]
we have
\[
  \bigl| \appl_1 (u(\theta, \varphi), \Psi^-) \bigr\ket = \cos \theta |1\ket - \sin \theta e^{-i\varphi} |0\ket \propto |v(\theta, \varphi)\ket.
\]
In order to interpret this result, suppose again that particle $A$ (resp. particle $B$) is given to Alice (resp. Bob), and that Alice measures her particle with outcome of~$|u(\theta, \varphi)\ket$:
$$ \bigl(\blank, |u(\theta, \varphi)\ket \bigr) = \Mes (A, \blank). $$

With the knowledge of both $(A,B) \models |\Psi^-\ket$ and $(\blank, |u(\theta, \varphi\ket) = \Mes(A, \blank)$, one can infer that $B \models |v(\theta, \varphi)\ket$. Typically, Alice will be able to make such an inference after having measured her particle. But the original judgement, namely $(A,B) \models |\Psi^-\ket$, remains valid since, from its definition as an implication regarding the potential outcome of a measurement, its validity is independant of whether the considered system has actually been measured.



We thus now have two valid judgements about $B$:
$$ (A, B) \models |\Psi^-\ket, \qquad B \models |v(\theta, \varphi)\ket. $$

Obviously, the choice of applying $\sigma_z$ (and more generally of which unitary operator should be applied to $B$) follows from the outcomes obtained by Alice.

\ 


Now, suppose that Bob measures his particle, with the same observable $\coll{|u(\theta, \varphi)\ket, |v(\theta, \varphi)\ket}$. Since $B \models |v(\theta, \varphi)\ket$, outcome $|v(\theta, \varphi)\ket$ is certain:
$$ \bigl( \blank, |v(\theta, \varphi)\ket \bigr) = \Mes (B, \coll{|u(\theta, \varphi)\ket, |v(\theta, \varphi)\ket}). $$

But then, knowing the outcome of the measurement of $B$, since $(A, B) \models |\Psi^-\ket$ is still valid, it is possible to deduce~that
$$ A \models \mathopen| \appl_2(v(\theta, \varphi), \Psi^-) \ket \propto | u(\theta, \varphi) \ket. $$

Having now both $A \models | u(\theta, \varphi) \ket$ and $B \models |v(\theta, \varphi)\ket $, we indeed have
$$ (A, B) \models | u(\theta, \varphi) \otimes v(\theta, \varphi)\ket. $$
We thus have obtained two correct judgements regarding $A$ and $B$:
$$  (A, B) \models \mathopen|\Psi^-\ket \quad \hbox{and} \quad (A, B) \models | u(\theta, \varphi) \otimes v(\theta, \varphi)\ket. $$


This illustrates that, as announced earlier, it is possible to that distinct and yet valid judgements about the same quantum system. Moreover, if Alice and Bob exchange their measurement outcomes, they become both able to express the previous two judgements about $A$ and $B$.

This situation can be related to Relational Quantum Mechanics \cite{Rovelli96RQM} from which we quote:
\begin{quotation}
``\textbf{Main Observation:} In quantum mechanics different observers may give different accounts of the same sequence of events.'',
\end{quotation}
but in QML, several accounts of the same situation can be obtained and, these accounts being made using verification judgement which express classical knowledge about a quantum system, they are valid independently of who, where and when they are obtained and can, to that respect, be considered as objectively true.

This is, we believe, a major gain of switching from a state vector-based description to a verification judgement-based one: everything that can be expressed in QML represents purely epistemic and observer-independant statements about a quantum system.





\subsection{Hardy's Paradox}

We now analyze Hardy's Paradox \cite{Hardy92,Hardy93,Hardy2000QuantumCakes} in the light of verification judgements. Following the presentation of this paradox given by Mermin \cite{Mermin94QuantumMysteriesRefined,Mermin98Ithaca,Mermin98What}, we consider two qubits $A$ and $B$, and for each qubit, two observables with eigenvectors $|0\ket$ and $|1\ket$ on the one hand, and $|+\ket$ and $|-\ket$ on the other hand. We suppose that system $(A,B)$ has been prepared in state $|\Psi_{\!H}\ket$ defined~as
$$ |\Psi_{\!H}\ket = |+_A, +_B\ket - \bra 1_A, 1_B | +_A, +_B \ket |1_A, 1_B\ket. $$
Using the partial application operation, we~have
\begin{align*}
  \bigl| \appl_2(-_B, \Psi_{\!H}) \bigr\ket & = \bra -_B | +_B \ket | +_A \ket - \bra -_B | 1_B \ket \bra 1_A, 1_B | +_A, +_B \ket | 1_A \ket \\
  & = - \bra -_B | 1_B \ket \bra 1_A, 1_B | +_A, +_B \ket | 1_A \ket \\
  & \propto | 1_A \ket,
\end{align*}
\begin{align*}
  \bigl| \appl_1(1_A, \Psi_{\!H}) \bigr\ket & = \bra 1_A | +_A \ket | +_B \ket - \bra 1_A | 1_A \ket \bra 1_A, 1_B | +_A, +_B \ket | 1_B \ket \\
  & \propto | +_B \ket - \bra 1_B | +_B \ket | 1_B \ket \\
  & \propto | 0_B \ket,
\end{align*}
since $|+_B\ket = \bra 0_B | +_B \ket |0_B\ket + \bra 1_B | +_B \ket |1_B\ket $, and~finally
%
%
\begin{align*}
  \bigl| \appl_2(0_B, \Psi_{\!H}) \bigr\ket & = \bra 0_B | +_B \ket | +_A \ket - \bra 0_B | 1_B \ket \bra 1_A, 1_B | +_A, +_B \ket | 1_A \ket \\
  & = \bra 0_B | +_B \ket |+_A \ket \\
  & \propto | +_A \ket.
\end{align*}
These calculations can be ``chained'', as represented in figure~\ref{fig_hardy}, and constitute the core of Hardy's Paradox, since obtaining outcome $|-_B\ket$ when measuring particle $B$ seems to imply that $|+_A\ket$ should be a certain outcome for $A$ (so that outcomes $|-_A\ket$ and $|-_B\ket$ cannot be obtained simultaneously). Let us focus now on the way this can be interpreted using QML, and see whether this paradox remains.

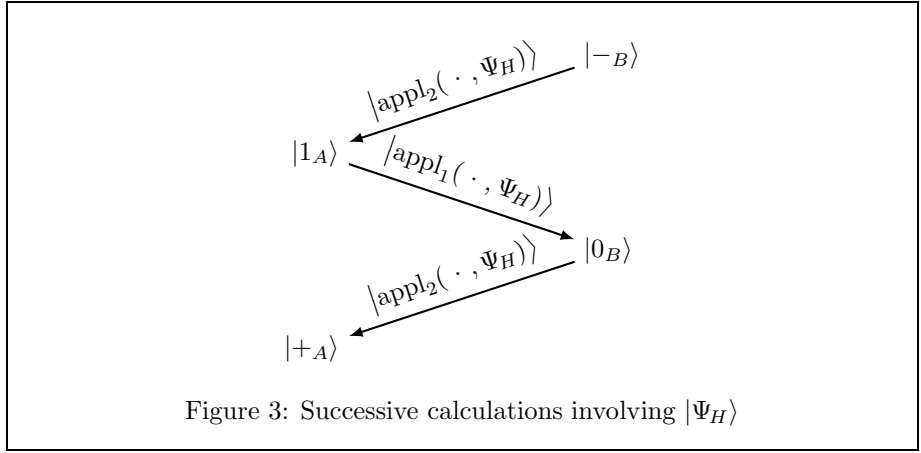
\begin{figure}[htbc]
  \begin{framed}
\centering
\begin{tikzpicture}[baseline=(current bounding box.center),
    node distance = 7mm and 30mm]
\node (a) {$|-_B\ket$} ;
\node (b) [below left = of a] {$|1_A\ket$} ;
\node (c) [below right = of b] {$|0_B\ket$} ;
\node (d) [below left = of c] {$|+_A\ket$} ;
\draw[thick,-latex] (a) -- (b) node[midway, sloped, above] {$\bigl| \appl_2(\ \cdot \ ,\Psi_{\!H}) \bigr\ket$};
\draw[thick,-latex] (b) -- (c) node[midway, sloped, above] {$\bigl| \appl_1(\ \cdot \ ,\Psi_{\!H}) \bigr\ket$};
\draw[thick,-latex] (c) -- (d) node[midway, sloped, above] {$\bigl| \appl_2(\ \cdot \ ,\Psi_{\!H}) \bigr\ket$};
\end{tikzpicture}
\caption{Successive calculations involving $|\Psi_{\!H}\ket$}
\label{fig_hardy}
\end{framed}
\end{figure}

\ 

We suppose that system $(A, B)$ verifies $|\Psi_{\!H} \ket $ and that neither $A$ nor $B$ have been measured yet. If Bob measures $B$ and obtains outcome $|-_B\ket$, this implies, since $\mathopen| \appl_2(-_B, \Psi_{\!H}) \ket \propto |1_A\ket$, that $A \models |1_A\ket$:
$$ (\blank, |-_B\ket) = \Mes(B, \blank) \implies A \models |1_A\ket $$

If now Alice measures $A$ with observable $\bigl\{|0_A\ket, |1_A\ket\bigr\}$, then of course she is to obtain outcome $|1_A\ket$:
$$
\bigl[ A \models |1_A\ket \cand (\blank, |\varphi\ket) = \Mes(A, \{|0_A\ket, |1_A\ket\})\bigr] \implies |\varphi\ket = |1_A\ket $$
As a consequence, we deduce that $\mathopen| \appl_1(1_A, \Psi_{\!H})\ket \propto |0_B\ket$ is a certain outcome for~$B$:
$$ (\blank, |1_A\ket) = \Mes(A, \blank) \implies B \models |0_B\ket $$

But at this point, $B$ has already been measured by Bob with outcome $|-_B\ket$, so that it is not possible to have $|0_B\ket$ as an outcome when measuring $B$, and hence, the deduction step:
$$ (\blank, |0_B\ket) = \Mes(B, \blank) \implies A \models \mathopen| \appl_2(0_B, \Psi_{\!H}) \mathclose\ket \propto | +_A \ket $$
cannot be made.

\ 

As a consequence, using QML and by subjecting the order of measurement to a careful scrutiny, it seems that it is not possible to deduce that $A$ verifies $|+_A\ket$ if outcome $|-_B\ket$ has been obtained when measuring~$B$, thus nullifying Hardy's Paradox.

\section{Perspectives} 

Let us first mention an element which is usually considered as consitutive of quantum mechanics, and which has not been evoked in our approach, namely probabilities: the Born Rule is inherently a rule telling, given the state vector of the system being measured, the probabilities of obtaining a given outcome. Since in our approach, state vectors where replaced by verification judgements, it would seem natural to adapt the Born Rule in the following way:
$$ \mathrm{prob}\Bigl((\blank, |\varphi\ket) = \Mes(S, \blank) \Big\vert S \models |\psi\ket\Bigr) = \bigl\vert \bra \psi \vert \varphi \ket \bigr\vert^2 $$
However, as we have seen, it is possible that different verification judgements describe the same situation. For instance, in the case of an EPR pair made of two particles $A$ and $B$, one could have, simultaneously
$$ (A,B) \models |\Psi^-\ket \quad \hbox{and} \quad (A,B) \models |0 \otimes 1\ket.$$
As a consequence, such a direct adaption of the Born Rule is not possible without a deeper understanding of the meaning of verification judgements. 

\ 

Another point which needs to be developed is the rigourous representation of the experimental setup and, more generally, the arrangement of the different components (unitary operations and measurements) acting of a system. As shown in the study of Hardy's paradox, it is important to organize things sufficiently so that a given part of the system can be acted on only once (in order to avoid, for instance, that a single element be measured twice).

\

Finally, we have only presented here a simplified version of the QML formalism, with only vector rays and nond-degenerate observables. A more general version of this formalism should be able to take into account every closed subspace of a Hilbert space, and every observable.

\bibliographystyle{apalike}

\end{document}